\newtheorem{theorem}{Theorem}[section]
\newtheorem{lemma}[theorem]{Lemma}
\theoremstyle{definition}
\newtheorem{definition}[theorem]{Definition}
\numberwithin{equation}{section}
\def\N{{\mathbb{N}}}
\numberwithin{equation}{section}
\def\0{{\bar{0}}}
\def\R{{\mathbb{R}}}
\begin{document}

\title[Sigma-Delta quantization for fusion frames]{High order low-bit Sigma-Delta quantization\\ for fusion frames}

 \author{Zhen Gao}
\address{Department of Mathematics, Vanderbilt University, Nashville, TN 37240, USA}
\email{zhen.gao.1@vanderbilt.edu}
%%%\thanks{Z. Gao was supported in part by...}

\author{Felix Krahmer}
\address{Research Unit M15, Department of Mathematics, Technische Universit\"at M\"unchen, 85748 Munich, Germany}
\email{felix.krahmer@tum.de}
%%%\thanks{F. Krahmer was supported in part by...}

\author{Alexander M. Powell}
\address{Department of Mathematics, Vanderbilt University, Nashville, TN 37240, USA}
\email{alexander.m.powell@vanderbilt.edu}
%%%\thanks{A.M. Powell was supported in part by NSF DMS Grant ?????, and gratefully acknowledges the Academia Sinica Institute of Mathematics (Taipei, Taiwan) for their hospitality and support.}

%    General info
\subjclass[2010]{Primary 68P30}

%\date{January 1, 2001 and, in revised form, June 22, 2001.}
%\date{\today}

\
\keywords{Fusion frames, Sigma-Delta quantization, sensor networks.}

\begin{abstract}
We construct high order low-bit Sigma-Delta $(\Sigma \Delta)$ quantizers for the vector-valued setting of fusion frames.
We prove that these $\Sigma \Delta$ quantizers can be stably implemented to quantize 
fusion frame measurements on subspaces $W_n$ using $\log_2( {\rm dim}(W_n)+1)$ bits per measurement. 
Signal reconstruction is performed using a version of Sobolev duals for fusion frames, and numerical experiments are given to validate the overall performance.
\end{abstract}

\maketitle

\section{Introduction} \label{intro-sec}

Fusion frames provide a mathematical setting for representing signals in terms of projections onto a redundant collection of closed subspaces.
Fusion frames were introduced in \cite{CK}  as a tool for data fusion, distributed processing, and sensor networks, e.g., see \cite{CKL,CKLR}.
In this work we consider the question of how to perform quantization, i.e., analog-to-digital conversion, on a collection of fusion frame measurements.

Our motivation comes from the stylized sensor network in \cite{JJAP}.  
Suppose that one seeks to measure a signal $x\in \R^d$ over a large environment using a collection of remotely dispersed sensors $s_n$ that are constrained by
limited power, limited computational resources, and limited ability to communicate. 
Each sensor is only able to make local measurements of the signal, and the goal is to communicate the local measurements to a distantly located base station where the signal $x$ can be accurately estimated.  The sensor network is modeled as a fusion frame and is physically constrained in the following manner:
\begin{itemize}
\item Each local measurement $y_n$ is a projection of $x$ onto a subspace $W_n$ associated to $s_n$.
\item Each sensor has knowledge of the proximities $W_n$ of a small number of nearby sensors.
\item Each sensor can communicate analog signals to a small number of nearby sensors.
\item Each sensor can transmit a {\em low-bit} signal to the distant base station.
\item The base station is relatively unconstrained in power and computational resources.
\end{itemize}

Mathematically, the above sensor network problem can be formulated as a quantization problem for fusion frames, e.g., \cite{JJAP}.  
Suppose that $\{W_n\}_{n=1}^N$ are subspaces of $\R^d$
and suppose that each $\mathcal{A}_n \subset W_n$ is a finite quantization alphabet.  Given $x\in \R^d$ and the orthogonal projections $y_n = P_{W_n}(x)$,
we seek an efficient algorithm for rounding the continuum-valued measurements $y_n\in W_n$ to finite-valued $q_n \in \mathcal{A}_n$.  This rounding process is called quantization and it provides a digital representation of $y_n$ through $q_n$.
Here, $q_n$ corresponds to the low-bit signal that the sensor $s_n$ transmits to the central base station.  We will focus on the case where the $q_n$ are computed sequentially and we allow the algorithm to be implemented with a limited amount of memory.  The memory variables correspond to the analog signals that sensors communicate to other
nearby sensors.  Finally, once the quantized measurements $\{q_n\}_{n=1}^N$ have been computed, we seek a reconstruction procedure for estimating $x$; this corresponds to 
the role of the base station.

We address the above problem with a new low-bit version of Sigma-Delta ($\Sigma \Delta$) quantization for fusion frames.  
Sigma-Delta quantization is a widely applicable class of algorithms for quantizing oversampled signal representations.  
Sigma-Delta quantization was introduced in \cite{IYM}, underwent extensive theoretical development in the engineering literature \cite{Gray},
and has been widely implemented in the circuitry of analog-to-digital converters \cite{NST}.  
Starting with \cite{DD}, the mathematical literature provided approximation-theoretic error bounds for Sigma-Delta quantization in a variety of settings, starting with
bandlimited sampling expansions \cite{DD,DGK, G,G-jams, GT,Y}. The best known constructions yield error bounds decaying exponentially in the bit budget \cite{G,DGK}, which is also the qualitative behavior that one encounters when quantizing Nyquist rate samples at high precision. The rate of this exponential decay, however, is provably slower for Sigma-Delta \cite{KW}.

Subsequently, Sigma-Delta was generalized to time-frequency representations \cite{Y-gabor} and finite frame expansions \cite{BPY}. As it turned out, the direct generalization to frames has significant limitations unless the frame under consideration has certain smoothness properties \cite{BodPau}. A first approach to overcome this obstacle was to recover signals using a specifically designed dual frame, the so-called Sobolev Dual \cite{BLPY,KSW}; this approach has also been implemented for compressed sensing \cite{GLPSY,KSY,FK}. Another class of dual frames that sometimes outperform Sobolev duals are the so-called beta duals \cite{CG}. In the context of compressed sensing, Sobolev duals have inspired a convex optimization approach for recovery \cite{SWY}, which has also been analyzed for certain structured measurements \cite{FKS}.

Since fusion frames employ vector-valued measurements, our approach in Definition \ref{ffsd-def} may be viewed as a vector-valued analogue of Sigma-Delta
quantization.  For perspective, we point out related work on $\Sigma\Delta$ quantization of finite frames with complex alphabets on a lattice \cite{BOT}, hexagonal $\Sigma\Delta$ modulators for power electronics \cite{LD05}, and dynamical systems motivated by error diffusion in digital halftoning \cite{AKMPST-2005, ANSTW-2015, ANST-2010}.

The work in \cite{JJAP} constructed and studied stable analogues of Sigma-Delta quantization in the setting of fusion frames.
The first order $\Sigma \Delta$ algorithms in \cite{JJAP} were stably implementable using very low-bit quantization alphabets.  Unfortunately, the higher order $\Sigma\Delta$ algorithms in \cite{JJAP} required large quantization alphabets for stability to hold.  
Stable high order $\Sigma\Delta$ algorithms are desirable since quantization error generally decreases as the order of a Sigma-Delta algorithm increases, e.g., \cite{DD}.
The main contribution of the current work is that we provide the first examples of stable high-order {\em low-bit} Sigma-Delta quantizers for fusion frames.

Our results achieve the following:
\begin{itemize}
\item We construct stable $r$th order fusion frame Sigma-Delta (FF$\Sigma\Delta$) algorithms with quantization alphabets that use $\log_2( {\rm dim}(W_n)+1)$-bits per subspace $W_n$, 
see Theorems \ref{stab-thm} and \ref{ord-r-thm}.  This resolves a question posed in \cite{JJAP}. 
\item We provide numerical examples to show that the FF$\Sigma\Delta$ algorithm performs well when implemented together with a version of Sobolev dual reconstruction.
\end{itemize}

\section{Fusion frames and quantization}

In this section, we provide background on fusion frames and quantization.

\subsection{Fusion frames}
Let $\{W_n\}_{n=1}^N$ be a collection of subspaces of $\R^d$ and let $\{c_n\}_{n=1}^N \subset (0,\infty)$ be positive scalar weights.
The collection $\{(W_n, c_n)\}_{n=1}^N$ is said to be a {\em fusion frame} for $\R^d$ with {\em fusion frame bounds} $0<A \leq B<\infty$ if
$$\forall x \in \R^d, \ \ \ A\|x\|^2 \leq \sum_{n=1}^N c_n^2 \|P_{W_n}(x)\|^2 \leq B\|x\|^2.$$
If the bounds $A,B$ are equal, then the fusion frame is said to be {\em tight}.
If $c_n=1$ for all $1 \leq n \leq N$, then the fusion frame is said to be {\em unweighted}.
Given a fusion frame, the associated {\em analysis operator} $T:\mathbb{R}^d \longrightarrow \bigoplus_{n=1}^N W_n$ is defined by 
$$T(x) = \{ c_n P_{W_n}(x) \}_{n=1}^N.$$
The problem of recovering a signal $x\in \R^d$ from fusion frame measurements $y_n = P_{W_n}(x)$ is equivalent to finding a left inverse to the analysis operator.  
There is a canonical choice of left inverse which can be described using the synthesis operator and the frame operator.

The adjoint $T^* \ : \bigoplus_{n=1}^N W_n  \longrightarrow \mathbb{R}^d$ of the analysis operator is called the {\em synthesis operator} and is defined by
$T^*(\{f_n\}_{n=1}^N) = \sum_{n=1}^N c_n f_n.$
The {\em fusion frame operator} $S: \R^d \to \R^d$ is defined by
$S(x) =( T^*T)(x)= \sum_{n=1}^N c_n^2 P_{W_n}(x).$
It is well-known \cite{CK} that $S$ is a positive self-adjoint operator.  Moreover, $L = S^{-1} T^*$ is a left inverse to $T$ since
${L}T=S^{-1} T^* T = S^{-1} S = I$.
This provides the following canonical reconstruction formula for recovering $x\in \R^d$ from fusion frame measurements $y_n =  P_{W_n}(x)$
\begin{equation}
\forall x \in \R^d, \ \ \ x = LT x = S^{-1} S x = \sum_{n=1}^N c_n^2 S^{-1}(y_n). \notag
\end{equation}
Although the canonical choice of left inverse $L = S^{-1} T^*$ is natural, 
other non-canonical left-inverses will be more suitable for the problem of reconstructing a signal $x$ from quantized measurements.

\subsection{Norms and direct sums}

The direct sum space $\bigoplus_{n=1}^N W_n$ arises naturally in the study of fusion frames.  In the  interest of maintaining simple notation,
we use the norm symbol $\| \cdot \|$ in different contexts throughout the paper to refer to norms on both Euclidean space and direct sum spaces, as well as operator norms on such spaces.

The following list summarizes different ways in which norm notation is used throughout the paper.
\begin{itemize}
\item If $x \in \R^d$ then $\|x\|$ denotes the Euclidian norm.  
\item If $H: \R^d \to \R^d$ then $\|H\| = \sup_{x \in \R^d} \frac{ \|Hx\|}{\|x\|}$.
\item If $w\in\bigoplus_{n=1}^N W_n$, then $\|w\| = \left( \sum_{n=1}^N \| w_n \|^2 \right)^{1/2}$ and $\|w\|_\infty = \sup_{1 \leq n \leq N} \|w_n\|.$
\item If $G: \bigoplus_{n=1}^N W_n \to \bigoplus_{n=1}^N W_n$ then $\|G\| = \sup_{w \in \bigoplus W_n} \frac{\|Gw\|}{\|w\|}$.
\item If $\mathcal{L}: \bigoplus_{n=1}^N W_n \to \R^d$ then $\|\mathcal{L}\| = \sup_{w \in \bigoplus W_n} \frac{\|\mathcal{L}w\|}{\|w\|}$.
\end{itemize}

\subsection{Quantization}

Let $x\in \R^d$ and suppose that $\{W_n\}_{n=1}^N$ are subspaces associated with a fusion frame for $\R^d$.
For each $1 \leq n \leq N$, let $\mathcal{A}_n \subset W_n$ be a finite set which we refer to as a {\em quantization alphabet}, and 
let $Q_n: W_n \to \mathcal{A}_n$ be an associated vector quantizer with the property that
\begin{equation} \label{vect-quant}
\forall w \in W_n,  \ \ \ \|w -  Q_n(w)\| =  \min_{q \in \mathcal{A}_n}  \|w -q\|.
\end{equation}

Memoryless quantization is the simplest approach to quantizing a set of fusion frame measurements $y_n = P_{W_n}(x), 1 \leq n \leq N$.
Memoryless quantization simply quantizes each $y_n$ to $q_n = Q_n(y_n)$.  See \cite{JJAP} for basic discussion on the performance of memoryless quantization for fusion frames.  This approach works well when the alphabets $\mathcal{A}_n$ are sufficiently fine and dense, and is also suitable when the subspaces are approximately orthogonal.  On the other hand, it is not very suitable for our sensor network problem which requires coarse low-bit alphabets and involves correlated subspaces $W_n$.   We will see that Sigma-Delta quantization is a preferable approach.

We will make use of the low-bit quantization alphabets provided by the following lemma.  These alphabets use $\log_2( {\rm dim}(W_n) +1)$ bits to quantize each subspace $W_n$.  For perspective,
in the scalar-valued setting, it is known that stable $\Sigma\Delta$ quantizers of arbitrary order can be implemented using a 1-bit quantization alphabet to quantize each scalar-valued sample, \cite{DD}.  The vector-valued alphabet in the following lemma provides a suitable low-bit analogue of this for fusion frames.

\begin{lemma} \label{simplex-lemma}
Let $W$ be an $m$-dimensional subspace of $\mathbb{R}^d$.  There exists a set $\mathcal{U}(W) =\{u_k\}_{k=1}^{m+1}$ in $W$ such
that $\sum_{j=1}^{m+1} u_j =0$, and each $u_j$ is unit-norm $\|u_j\|=1$, and
\begin{equation} \label{equi-angle}
\langle u_j, u_k \rangle = - \frac{1}{m}, \hbox{ for } j \neq k.
\end{equation}
Moreover, if $\theta_0 = cos^{-1} (\frac{1}{m})$, then for every $w \in W\backslash \{ 0\}$, there exists $k$ such that
\begin{equation} \label{theta-net}
angle(w, u_k) \leq \theta_0.
\end{equation}
\end{lemma}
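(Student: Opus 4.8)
The plan is to realize $\mathcal{U}(W)$ as the vertices of a regular simplex, which is cleanest to build one dimension higher. First I would fix a linear isometry $\Phi$ from the hyperplane $V=\{z\in\R^{m+1}:\sum_{i=1}^{m+1}z_i=0\}$ onto $W$; since $V$ is an $m$-dimensional inner product space and $\Phi$ preserves inner products and norms, it is enough to construct the configuration in $V$ and carry it across by $\Phi$. In $V$, set $v_j=e_j-\frac{1}{m+1}\mathbf{1}$, the orthogonal projection of the $j$th standard basis vector onto $V$, where $\mathbf{1}=\sum_{i=1}^{m+1}e_i$. A direct computation gives $\sum_{j=1}^{m+1}v_j=0$, $\|v_j\|^2=\frac{m}{m+1}$, and $\langle v_j,v_k\rangle=-\frac{1}{m+1}$ for $j\neq k$. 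Normalizing by $u_j:=\sqrt{\frac{m+1}{m}}\,\Phi(v_j)$ then produces unit vectors satisfying $\sum_j u_j=0$ and $\langle u_j,u_k\rangle=\frac{m+1}{m}\cdot(-\frac{1}{m+1})=-\frac{1}{m}$, which gives the first three assertions.

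For the covering property \eqref{theta-net} I would first translate it into an inner-product inequality. The angle condition is scale invariant, so I may assume $\|w\|=1$; since $\|u_k\|=1$ and $\cos$ is decreasing on $[0,\pi]$, the bound $\mathrm{angle}(w,u_k)\le\theta_0$ is equivalent to $\langle w,u_k\rangle\ge\frac{1}{m}$. Thus it suffices to show $\max_k\langle w,u_k\rangle\ge\frac{1}{m}$ for every unit $w$. Writing $w=\Phi(z)$ with $z\in V$, $\|z\|=1$, and setting $a_k=\langle w,u_k\rangle$, the relation $a_k=\sqrt{\frac{m+1}{m}}\,z_k$ (which uses $\langle z,\mathbf{1}\rangle=0$) yields the two identities
\begin{equation}
\sum_{k=1}^{m+1}a_k=0\qquad\text{and}\qquad\sum_{k=1}^{m+1}a_k^2=\frac{m+1}{m}. \notag
\end{equation}

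The crux is the bound $\max_k a_k\ge\frac{1}{m}$, and the main obstacle is that the obvious estimate points the wrong way: Cauchy--Schwarz applied to the $a_k$ controls the maximum from above rather than below. The device I would use is to exploit nonnegativity. Let $M=\max_k a_k$ and put $t_k=M-a_k\ge 0$; the two identities give $\sum_k t_k=(m+1)M$ and $\sum_k t_k^2=(m+1)M^2+\frac{m+1}{m}$. Since the $t_k$ are nonnegative, their cross terms are nonnegative, so $\sum_k t_k^2\le\big(\sum_k t_k\big)^2$, i.e. $(m+1)M^2+\frac{m+1}{m}\le(m+1)^2M^2$. This simplifies to $\frac{1}{m}\le mM^2$, hence $M\ge\frac{1}{m}$. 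Choosing $k$ to attain the maximum establishes \eqref{theta-net}; tracking the equality case (attained exactly in the directions $-u_j$) shows that $\theta_0=\cos^{-1}(1/m)$ is the sharp threshold.
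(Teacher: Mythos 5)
Your proof is correct, but note that the paper itself offers no proof of this lemma at all: it simply points the reader to the discussion following Lemma 1 in \cite{JJAP}, where the set $\mathcal{U}(W)$ is recognized as the vertex set of a regular $m$-simplex. So your argument is not a variant of the paper's proof but a self-contained replacement for an external citation, and it is worth recording. Your construction (pushing the projections $e_j-\tfrac{1}{m+1}\mathbf{1}$ of the standard basis onto the sum-zero hyperplane through an isometry onto $W$, then normalizing) is the standard realization of the simplex vertices, and your computations of the norms and inner products are right. The more substantive contribution is the covering bound: after the correct reduction to showing $\max_k \langle w,u_k\rangle\ge \tfrac1m$ for unit $w$, the identities $\sum_k a_k=0$ and $\sum_k a_k^2=\tfrac{m+1}{m}$ (i.e.\ the tight-frame property of $\{u_k\}$ in $W$) plus the nonnegativity trick $\sum_k t_k^2\le\left(\sum_k t_k\right)^2$ for $t_k=M-a_k\ge 0$ give exactly $M^2\ge \tfrac{1}{m^2}$, and your diagnosis of why naive Cauchy--Schwarz fails (it bounds $\max_k|a_k|$, which may be attained at a negative $a_k$) is apt. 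One pedantic point you should add: to pass from $M^2\ge\tfrac{1}{m^2}$ to $M\ge\tfrac1m$ you need $M>0$, which follows in one line because $\sum_k a_k=0$ with $\sum_k a_k^2>0$ forces some $a_k$ to be strictly positive. Your remark that equality holds exactly at $w=-u_j$ also checks out (equality in the cross-term inequality forces all but one $t_k$ to vanish, which pins down $w=-u_{j_0}$), confirming that $\theta_0=\cos^{-1}(\tfrac1m)$ is sharp, consistent with \eqref{theta-net} being stated with this exact constant.
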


For references associated to Lemma \ref{simplex-lemma}, see the discussion following Lemma 1 in \cite{JJAP}.

\section{Fusion frame Sigma-Delta quantization}

Throughout this section we shall assume that $\{W_n\}_{n=1}^\infty$ are subspaces of $\mathbb{R}^d$ and that each finite collection $\{W_n\}_{n=1}^N$ is an unweighted fusion frame for $\mathbb{R}^d$
when $N \geq d$.
We also assume that $\mathcal{A}_n =\mathcal{U}(W_n) \subset W_n$ is a set of $1+ {\rm dim} (W_n)$ vectors as in Lemma \ref{simplex-lemma},
and that $Q_n: W_n \to \mathcal{A}_n$ is a vector quantizer satisfying \eqref{vect-quant}.
Observe that by \eqref{vect-quant} and \eqref{theta-net} one has that for arbitrary $w\in W_n$ with $\|w\|=1$ 
\begin{align}\label{eq:mbq}
\langle Q_n(w), w\rangle = 1 - \tfrac{1}{2}\|Q_n(w) -w\|^2 = 1 - \tfrac{1}{2} \min_{q \in \mathcal{A}_n}  \|w -q\|^2 = \min_{q \in \mathcal{A}_n}  \langle q,w \rangle \geq \tfrac{1}{m}.
\end{align}

Given $x\in \R^d$, we shall investigate the following algorithm for quantizing fusion frame measurements $y_n = P_{W_n}(x)$.

\begin{definition}[Fusion frame Sigma-Delta algorithm]  \label{ffsd-def}
For each $n \geq 1$, fix operators $H_{n,j}: \R^d \to W_n$, $1 \leq j \leq L$.  
Initialize the state variables $v_0=v_{-1} = \cdots = v_{1-L} =0\in \mathbb{R}^d$.

The {\em fusion frame Sigma-Delta algorithm} (FF$\Sigma\Delta$) takes the measurements $y_n = P_{W_n}(x)$ as inputs and produces
quantized outputs $q_n \in \mathcal{A}_n$, $n \geq 1,$ by running the following iteration for $n\geq 1$
\begin{align}
q_n &= Q_n \left( y_n + \sum_{j=1}^{L} H_{n,j} (v_{n-j}) \right), \label{ffsd-eq1}\\
v_n &= y_n -q_n + \sum_{j=1}^{L} H_{n,j} (v_{n-j}).\label{ffsd-eq2}
\end{align}
\end{definition}

The algorithm  \eqref{ffsd-eq1}, \eqref{ffsd-eq2} may be applied to an infinite stream of inputs, but, in practice, the algorithm will usually be applied to a fusion frame of finite size and will terminate after finitely many steps.
The operators $H_{n,j}$ must be chosen carefully for the algorithm \eqref{ffsd-eq1}, \eqref{ffsd-eq2} to perform well.  We shall later focus on a specific choice of operators $H_{n,j}$ in 
Section \ref{rth-order-section},
but to understand its motivation, it is useful to first discuss reconstruction methods for the FF$\Sigma\Delta$ algorithm and to keep $H_{n,j}$ general for the moment.

The fusion frame Sigma-Delta algorithm must be coupled with a reconstruction procedure for recovering $x$ from the quantized measurements
$\{q_n\}$.  We consider the following simple reconstruction method that uses left inverses of fusion frame analysis operators.
At step $N$ of the FF$\Sigma\Delta$ algorithm, one has access to the quantized measurements $\{q_n\}_{n=1}^N$.  
Henceforth, $q\in \bigoplus_{n=1}^N {W}_n$ will denote the element of $\bigoplus_{n=1}^N W_n$ whose $n$th entry is $q_n \in \mathcal{A}_n \subset W_n$.
Since $\{W_n\}_{n=1}^N$ is a fusion frame with analysis operator $T=T_N$, let $\mathcal{L} = \mathcal{L}_N$ denote a left inverse of $T$, so that $\mathcal{L} Tx= x$ holds
for all $x\in \R^d$.  A specific choice of left inverse will be specified in Section \ref{rec-sec}, but for the current discussion let $\mathcal{L}$ be an arbitrary left inverse.
After step $N$ of the iteration \eqref{ffsd-eq1}, \eqref{ffsd-eq2}, we reconstruct the following $\widetilde{x}$ from $\{q_n\}_{n=1}^N$
\begin{equation} \label{lin-rec}
\widetilde{x} = \widetilde{x}_N = \mathcal{L} q.
\end{equation}

We now introduce notation that will be useful for describing the error $x- \widetilde{x}$ associated to \eqref{lin-rec}. 
Let $v \in \bigoplus_{n=1}^N {W}_n$ denote the element of $\bigoplus_{n=1}^N W_n$ whose $n$th entry is $v_n \in W_n$.
Let $\mathcal{I}_N:\bigoplus_{n=1}^N W_n \longrightarrow \bigoplus_{n=1}^N W_n$ denote the identity operator,
and let $\mathcal{H}_N:\bigoplus_{n=1}^N W_n \longrightarrow \bigoplus_{n=1}^N W_n$ denote the $N\times N$ block operator with entries
\begin{equation} \label{HN-entries}
\forall \ 1 \leq n,k \leq N, \ \ \ \ 
(\mathcal{H}_N)_{n,k} =
\begin{cases}
H_{n,n-k}, & \hbox{ if } 1 \leq n-k \leq L,  \\
0, & \hbox{otherwise.} 
\end{cases}
\end{equation}
Note that \eqref{ffsd-eq2} and \eqref{HN-entries} can be expressed in matrix form as $y-q=(\mathcal{I}_N-\mathcal{H}_N)v$.
Combining this and \eqref{lin-rec} allows the error  $x-\widetilde{x}$ to be expressed as
\begin{equation} \label{alg-err}
x - \widetilde{x} = \mathcal{L}_N(y - q) = \mathcal{L}_N(\mathcal{I}_N - \mathcal{H}_N)v.
\end{equation}

We aim to design the operators $\mathcal{H}_N$ in the quantization algorithm and the reconstruction operator $\mathcal{L}_N$ 
so that the error $\|x - \widetilde{x} \| = \|\mathcal{L}_N(\mathcal{I}_N - \mathcal{H}_N)v\|$ 
given by \eqref{alg-err} can be made quantifiably small.
We pursue the following design goals:
\begin{itemize}
\item Select $\mathcal{H}_N$ so that the iteration \eqref{ffsd-eq1}, \eqref{ffsd-eq2} satisfies a {\em stability condition} which controls the norm of the state variable sequence $v$.
\item Select $\mathcal{H}_N$ and $\mathcal{L}_N$ so that $\mathcal{L}_N(\mathcal{I}_N - \mathcal{H}_N)$ has small operator norm.  This can be decoupled into 
separate steps.  First, $\mathcal{H}_N$ is chosen to ensure that operator $\mathcal{I}_N - \mathcal{H}_N$ satisfies an {\em $r$th order condition} that expresses
$\mathcal{I}_N - \mathcal{H}_N$ in terms of a generalized $r$th order difference operator.
Secondly, $\mathcal{L}_N$ is chosen to be a {\em Sobolev left inverse} which is well-adapted to the operator $\mathcal{I}_N - \mathcal{H}_N$.
\end{itemize}
For the above points, Section \ref{stab-sec} discusses stability, Section \ref{rth-order-section} discusses the $r$th order property, and Section \ref{rec-sec} discusses reconstruction with Sobolev left inverses.

\section{Stability} \label{stab-sec}

The following theorem shows that the fusion frame $\Sigma\Delta$ algorithm is stable in the sense that control on the size of inputs $\|y_n\|$ ensures control on the size of state
variables $\|v_n\|$.  For perspective, the stable higher order fusion frame $\Sigma\Delta$ algorithm in \cite{JJAP} requires relatively large alphabets $\mathcal{A}_n$.

\begin{theorem} \label{stab-thm}
Let $\{W_n\}_{n=1}^N$ be subspaces of $\mathbb{R}^d$ with $d^\star = \max_{1 \leq n \leq N} {\rm dim}(W_n)$.
Suppose that a sequence $\{y_n\}_{n=1}^N$ with $y_n \in W_n$ is used as input to the algorithm \eqref{ffsd-eq1}, \eqref{ffsd-eq2}.

Suppose  that $0 < \delta < \frac{1}{d^\star}$, and let
$$\alpha_1 = \sqrt{\frac{1 - \frac{2\delta}{d^\star} + \delta^2}{1 - (\frac{1}{d^\star})^2} }
\ \ \ \hbox{ and } \ \ \ 
\alpha_2 = \frac{1}{2} \left( \left(\frac{1}{d^\star} - \delta \right) + \sqrt{\left( \frac{1}{d^\star} - \delta \right)^2+4} \thinspace \right).$$
Suppose that $\alpha= \sup_{1 \leq n \leq N} \sum_{j=1}^L \|H_{n,j}\|$ satisfies  $1 < \alpha \leq  \min \{ \alpha_1, \alpha_2 \},$ 
and let
$$C = \left(\frac{1}{d^\star} - \delta \right) \left( \frac{\alpha}{\alpha^2-1} \right).$$
If $\|y_n\| \leq \delta$ for all $1 \leq n \leq N$, then the state variables $v_n$ in \eqref{ffsd-eq1}, \eqref{ffsd-eq2} satisfy
$\|v_n\| \leq C$ for all $1 \leq n \leq N$.
\end{theorem}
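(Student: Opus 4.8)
The plan is to prove the bound $\|v_n\|\le C$ by induction on $n$, exploiting that each update is a single nearest-point quantization step whose error admits a quadratic estimate in the size of the quantizer input. First I would write the quantizer input as $u_n = y_n + \sum_{j=1}^{L} H_{n,j}(v_{n-j})$, so that \eqref{ffsd-eq1} and \eqref{ffsd-eq2} become $q_n = Q_n(u_n)$ and $v_n = u_n - Q_n(u_n)$. The base case is immediate from the initialization $v_0 = \cdots = v_{1-L} = 0$, which gives $\|v_n\| = 0 \le C$ for $n \le 0$. For the inductive step I assume $\|v_{n-j}\|\le C$ for $1\le j\le L$ and aim to deduce $\|v_n\|\le C$.

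The geometric heart of the argument is to convert the quantization inequality \eqref{eq:mbq} into a bound on $\|v_n\|$. Because every vector of $\mathcal{A}_n=\mathcal U(W_n)$ is unit-norm, the nearest-point quantizer depends only on the direction of its input, so $Q_n(u_n) = Q_n(u_n/\|u_n\|)$ for $u_n\ne 0$; applying \eqref{eq:mbq} to the unit vector $u_n/\|u_n\|$ yields $\langle u_n, Q_n(u_n)\rangle \ge \|u_n\|/{\rm dim}(W_n) \ge \|u_n\|/d^\star$. Expanding $\|v_n\|^2 = \|u_n\|^2 - 2\langle u_n, Q_n(u_n)\rangle + 1$ then gives $\|v_n\|^2 \le g(\|u_n\|)$, where $g(t) = t^2 - \tfrac{2}{d^\star}t + 1$ (the degenerate case $u_n = 0$ gives $\|v_n\| = 1 = g(0)$ directly). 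Combining the inductive hypothesis with $\|y_n\|\le\delta$ and the definition $\alpha = \sup_n\sum_j\|H_{n,j}\|$ bounds the input by $\|u_n\| \le \delta + \alpha C =: M$.

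It then remains to verify that $g(t)\le C^2$ for all $t\in[0,M]$. Since $g$ is a convex parabola, its maximum over $[0,M]$ is attained at an endpoint, so it suffices to check $g(0)\le C^2$ and $g(M)\le C^2$. Writing $\beta = \tfrac{1}{d^\star} - \delta$, the first reduces to $C\ge 1$: one checks that $\alpha_2$ is the positive root of $\alpha^2 - \beta\alpha - 1 = 0$, equivalently $\tfrac{\alpha_2}{\alpha_2^2-1} = \tfrac1\beta$, so that $C = \beta\tfrac{\alpha}{\alpha^2-1}$ decreases on $(1,\alpha_2]$ and equals $1$ at $\alpha=\alpha_2$; hence $\alpha\le\alpha_2$ forces $C\ge 1$ and $g(0)=1\le C^2$. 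The second inequality $g(M)\le C^2$, after expansion and use of the identity $C(\alpha^2-1)=\beta\alpha$, collapses to $\delta^2 - \tfrac{2\delta}{d^\star} + 1 \le \tfrac{\beta^2\alpha^2}{\alpha^2-1}$, which rearranges (using $\beta>0$ and $\alpha^2-1>0$) to $\alpha^2 \le \tfrac{1 - 2\delta/d^\star + \delta^2}{1 - (1/d^\star)^2} = \alpha_1^2$, once one notes that the denominator equals $(\delta^2 - \tfrac{2\delta}{d^\star}+1) - \beta^2 = 1 - (1/d^\star)^2$. Thus $\alpha\le\alpha_1$ yields $g(M)\le C^2$, and the inductive step is complete.

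I expect the main obstacle to be the bookkeeping in the last step: recognizing that the two hypotheses $\alpha\le\alpha_2$ and $\alpha\le\alpha_1$ correspond \emph{exactly} to the two endpoint conditions $g(0)\le C^2$ and $g(M)\le C^2$, and carrying out the algebra via $C(\alpha^2-1)=\beta\alpha$ so that the cross terms cancel. The convexity observation is what reduces a continuum of inequalities to just these two endpoints, while the hypotheses $0<\delta<1/d^\star$ (so $\beta>0$, $C>0$) and $\alpha>1$ (so $\alpha^2-1>0$) are precisely what keep the signs correct throughout.
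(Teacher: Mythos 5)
Your proposal is correct and follows essentially the same route as the paper: the same induction, the same quadratic bound $\|v_n\|^2 \le g(\|u_n\|)$ obtained from \eqref{eq:mbq} and scale invariance, the same input bound $\|u_n\| \le \alpha C + \delta$, and the same reduction to the two endpoint inequalities, with $\alpha \le \alpha_2$ giving $C \ge 1$ and $\alpha \le \alpha_1$ giving $g(\alpha C + \delta) \le C^2$. The only (cosmetic) difference is in the last step, where you verify $g(\alpha C+\delta)\le C^2$ by direct substitution using $C(\alpha^2-1)=\beta\alpha$, whereas the paper packages the same algebra as the statement $p(C)\le 0$ for a quadratic $p$ whose vertex is $C$ and whose discriminant is nonnegative precisely when $\alpha\le\alpha_1$.
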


\begin{proof}
\noindent {\em Step 1.} We begin by noting that the assumption $1 < \alpha \leq \min \{ \alpha_1, \alpha_2 \}$ is not vacuous.  The condition $1< \alpha_2$ directly follows from the assumption.
If $d^\star=1$ then $1 < \alpha_1 =\infty$ automatically holds.  
For $d^\star>1$, we rewrite 
\[
\alpha_1 = \sqrt{1 + \frac{(\frac{1}{d^\star} - \delta)^2}{1 - (\frac{1}{d^\star})^2} },
\]
which is strictly larger than $1$.\\

\noindent {\em Step 2.}  Next, we note that $C \geq 1$.  By the definition of $C$, it can be verified that $C\geq 1$ holds if and only if 
$$\alpha^2 - \left( \frac{1}{d^\star} - \delta \right) \alpha - 1 \leq 0.$$
It follows that $C\geq 1$ holds if and only if
$$\frac{\left( \frac{1}{d^\star} - \delta \right) - \sqrt{ \left( \frac{1}{d^\star} - \delta \right)^2 + 4}}{2}
\leq \alpha \leq \frac{\left( \frac{1}{d^\star} - \delta \right) + \sqrt{ \left( \frac{1}{d^\star} - \delta \right)^2 + 4}}{2}.$$
Since $\frac{\left( \frac{1}{d^\star} - \delta \right) - \sqrt{ \left( \frac{1}{d^\star} - \delta \right)^2 + 4}}{2} <0$ and 
$1<\alpha_2 = \frac{\left( \frac{1}{d^\star} - \delta \right) + \sqrt{ \left( \frac{1}{d^\star} - \delta \right)^2 + 4}}{2}$,
the assumption $1< \alpha \leq \alpha_2$ implies that $C\geq 1$, as required.\\

\noindent {\em Step 3.}  We will prove the theorem by induction.  The base case holds by the assumption that $v_0=v_{-1} = \cdots = v_{1-L} =0\in \mathbb{R}^d$.
For the inductive step, suppose that $n\geq 1$ and that $\|v_j\| \leq C$ holds for all $j \leq n-1$.

Let $z_n =  y_n+ \sum_{j=1}^{L} H_{n,j} (v_{n-j})$, so that $v_n = z_n -q_n$ with $q_n=Q_n(z_n)$.  If $z_n=0$ then $\|v_n\| = \| q_n\| = 1 \leq C$, as required.  So, it remains to consider $z_n \neq 0$.

When $z_n \neq 0$, let $\gamma_n =  \frac{\langle z_n, q_n \rangle}{\|z_n\|}$.  
Combining the definition of $d^{\star}$ and the fact that the quantizer $Q_n$ is scale invariant with \eqref{eq:mbq}, we obtain that $\gamma_n \geq  \frac{1}{d^\star}.$
Thus,
\begin{align}
\|v_n\|^2 &= \|z_n\|^2 +\|q_n\|^2 - 2 \|z_n\| \gamma_n \notag \\
&\leq \|z_n\|^2 - \frac{2}{d^\star} \|z_n\| +1. \label{vn-norm-bnd}
\end{align}

Since $\sum_{j=1}^L \|H_{n,j}\| \leq \alpha$, $\|y_n\| \leq \delta,$ and $\|v_{n-j}\| \leq C$, the definition of $z_n$ gives that
\begin{equation}
\| z_n \|  \leq  \|y_n\|+  \sum_{j=1}^{L} \|H_{n,j}\| \thinspace \|v_{n-j}\| \leq \delta +C\alpha. \label{zn-norm-bnd}
\end{equation}

Recall, we aim to show $\|v_n\| \leq C$.  Let $f(t) = t^2 - ( \frac{2}{d^\star}) t+1.$  By \eqref{vn-norm-bnd} and \eqref{zn-norm-bnd},  it suffices to prove
$$f([0,\alpha C+\delta]) \subseteq [0,C^2].$$
For that, we note that
$$\min \{ f(t) : t\in [0,\alpha C+\delta] \}  \geq f( {1}/{d^\star}) = 1 - ({1}/{d^\star})^2 \geq 0 .$$
and
\[\max \{ f(t) : t\in [0,\alpha C+\delta] \}  = \max \{ f(0), f(\alpha C+\delta) \} = \max \{ 1, f(\alpha C+\delta) \} \leq \max \{ C^2, f(\alpha C+\delta) \}.\]
Hence it only remains to show that $f(\alpha C+\delta) \leq C^2$.\\

\noindent {\em Step 4.}  Consider the polynomial $$p(x) =
(\alpha^2-1) x^2 + 2\alpha \left(\delta -   \frac{1}{d^\star} \right) x + \left(  1 -\frac{2 \delta}{d^\star}   + \delta^2\right).$$
Since $1<\alpha \leq \alpha_1$, it can be verified that the polynomial $p$ has real roots $r_1 \leq r_2$.  Since $\alpha >1$, one has that $p(x)\leq 0$ for all $x \in [ r_1,  r_2]$.
In particular, $p\left(\frac{r_1+r_2}{2} \right) \leq 0$.  Moreover, it can be checked that
$$\frac{r_1+r_2}{2} = \left(\frac{1}{d^\star} - \delta \right) \left( \frac{\alpha}{\alpha^2-1} \right) = C.$$
Thus, $p(C) \leq 0$.\\

\noindent {\em Step 5.}
Note that
\begin{align*}
f(\alpha C+\delta) &= (\alpha C+\delta)^2 -  \left( \frac{2}{d^{\star}} \right) (\alpha C+\delta) + 1\\
&= \alpha^2 C^2 + 2\alpha \left(\delta -   \frac{1}{d^\star} \right) C + \left(  1 -\frac{2 \delta}{d^\star}   + \delta^2\right)\\
&= p(C) +C^2.
\end{align*}
Since $p(C) \leq 0$ holds by Step 4, it follows that $f(\alpha C+\delta) \leq C^2$, as required.\\
\end{proof}

\section{$R$th order algorithms and feasibility} \label{rth-order-section}

Classical scalar-valued $r$th order Sigma-Delta quantization expresses coefficient quantization errors as an $r$th order difference of a bounded state variable, e.g., see \cite{DD, G}.
In this section we describe an analogue of this for the vector-valued setting of fusion frames.

Let $D_N:\bigoplus_{n=1}^N W_n \longrightarrow \bigoplus_{n=1}^N W_n$ be the $N\times N$ block operator defined by
\begin{equation} \label{DN-def}
\forall \ 1\leq n,k \leq N,  \ \ \ \ \ (D_N)_{n,k} =
\begin{cases}
I, & \hbox{ if } n=k,\\
-P_{W_n}, & \hbox{ if } n=k+1,\\
0, & \hbox{otherwise.}\\
\end{cases}
\end{equation}

\begin{definition}[$r$th order algorithm]
The fusion frame Sigma-Delta iteration \eqref{ffsd-eq1}, \eqref{ffsd-eq2} is an {\em $r$th order algorithm} if
for every $N \geq d$ there exist operators $G_N: \bigoplus_{n=1}^N W_n \longrightarrow \bigoplus_{n=1}^N W_n$
that satisfy
\begin{equation} \label{r-ord-def1}
\mathcal{I}_N - \mathcal{H}_N = (D_N)^rG_N,
\end{equation}
and
\begin{equation}\label{r-ord-def2}
\sup_N \|G_N\| < \infty.
\end{equation}
Moreover, given $\epsilon>0$, we say that $\{ (G_N, \mathcal{H}_N)\}_{N=d}^\infty$ is {\em $\epsilon$-feasible} if the operators $H_{n,j}$ that define $\mathcal{H}_N$ by \eqref{HN-entries} satisfy
\begin{equation} \label{epsilon-H-bnd}
\sup_{n \geq 1} \sum_{j=1}^L \|H_{n,j} \| \leq 1 + \epsilon.
\end{equation}
\end{definition}

The $r$th order condition \eqref{r-ord-def1} should be compared with the scalar-valued analogue in equation (4.2) in \cite{G}, cf. \cite{DGK}.
For perspective, the condition \eqref{epsilon-H-bnd} ensures that the stability result from Theorem \ref{stab-thm} can be used. 
The $r$th order conditions \eqref{r-ord-def1}, \eqref{r-ord-def2} will later be used in Section \ref{rec-sec} to provide control on the quantization error $\|x - \widetilde{x}\|$.

We now show that it is possible to select $\mathcal{H}_N$ so that the low-bit fusion frame Sigma-Delta algorithm in \eqref{r-ord-def1}, \eqref{r-ord-def2} 
is $r$th order and $\epsilon$-feasible with small $\epsilon>0$.

We make use of the following sequences $n_j, d_j, h$ defined in \cite{G}. The constructions have subsequently been improved in \cite{K, DGK}, but we will work with the (suboptimal) first construction, as it allows for closed form expressions.
Given $r \in \N$, define the index set $\mathbb{N}_r = \N \cap [1,r]$.
Let $r, \sigma \in \N$ be fixed and define the sequences $\{n_j\}_{j=1}^r$ and $\{d_j\}_{j=1}^r$ by
\begin{align} \label{nj-dj-defs}
n_j = \sigma (j-1)^2+1 \ \ \ \hbox{ and } \ \ \ d_j = \prod_{i \in \mathbb{N}_r  \backslash \{j\}} \frac{n_i}{n_i-n_j}.
\end{align}
Next, define $h \in \ell^1(\N)$ by 
\begin{equation} \label{h-def}
h = \sum_{j=1}^r d_j \delta_{n_j},
\end{equation}
where $\delta_n \in \ell^1(\N)$ is defined by $\delta_n(j) = 1$ if $j=n$, and $\delta_n(j) = 0$ if $j\neq n$.
We will later use the property, proven in \cite{G}, that $h$ satisfies
\begin{equation}\label{h-ell-one}
\|h\|_{\ell^1} < \cosh (\pi \sigma^{-1/2}).
\end{equation}

Let $L=n_r$ and define the $N\times N$ block operator $\mathcal{H}_N$ using \eqref{HN-entries} with
$1 \leq n \leq N$ and $1\leq j \leq L$ and
\begin{equation} \label{H-def}
H_{n,j} = 
\begin{cases}
h_j P_{W_n}P_{W_{n-1}} \cdots P_{W_{n-j+1}}, &\hbox{ if } n>j,\\
0, & \hbox{otherwise.}
\end{cases}
\end{equation}

In the following result, we prove that the fusion frame Sigma-Delta algorithm with operators \eqref{H-def} is $r$th order and $\epsilon$-feasible.
\begin{theorem} \label{ord-r-thm}
Fix $r\geq 2$.  Given $\epsilon>0$, if $\sigma\in \N$ is sufficiently large and if the operators $H_{n,j}: \bigoplus_{n=1}^N W_n \longrightarrow \bigoplus_{n=1}^N W_n$ are defined by 
\eqref{nj-dj-defs}, \eqref{h-def}, \eqref{H-def}, 
then the fusion frame Sigma-Delta algorithm \eqref{ffsd-eq1}, \eqref{ffsd-eq2} is an $r$th order algorithm and is $\epsilon$-feasible.
\end{theorem}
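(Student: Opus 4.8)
The plan is to verify the two requirements separately, handling $\epsilon$-feasibility first since it is essentially immediate. By \eqref{H-def} and the fact that a product of orthogonal projections has operator norm at most $1$, we have $\|H_{n,j}\| \le |h_j|$, whence $\sum_{j=1}^L \|H_{n,j}\| \le \sum_{j=1}^L |h_j| = \|h\|_{\ell^1}$ uniformly in $n$. By \eqref{h-ell-one} one has $\|h\|_{\ell^1} < \cosh(\pi\sigma^{-1/2})$, and since $\cosh(\pi\sigma^{-1/2}) \to 1$ as $\sigma \to \infty$, choosing $\sigma$ large enough forces $\|h\|_{\ell^1} \le 1 + \epsilon$, which is exactly \eqref{epsilon-H-bnd}.

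For the $r$th order property I would first compute the powers of $D_N$ explicitly. A short induction on $r$, using the recursion $(D_N^r)_{n,k} = (D_N^{r-1})_{n,k} - P_{W_n}(D_N^{r-1})_{n-1,k}$ together with Pascal's rule, shows that
\begin{equation*}
(D_N^r)_{n,k} = (-1)^{n-k}\binom{r}{n-k}\, P_{W_n} P_{W_{n-1}} \cdots P_{W_{k+1}} \quad \text{for } 0 \le n-k \le r,
\end{equation*}
and $0$ otherwise, with the empty product read as $I$ when $n=k$. The crucial structural observation is that every entry is a scalar multiple of the \emph{ordered projection product} $P_{W_n}\cdots P_{W_{k+1}}$ determined solely by the row and column indices.

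This suggests seeking $G_N$ of the same shape, $(G_N)_{n,k} = g_{n-k}\, P_{W_n}\cdots P_{W_{k+1}}$ for a scalar sequence $\{g_i\}_{i\ge 0}$ to be determined (again with $I$ on the diagonal). The key point is that these products compose perfectly, $P_{W_n}\cdots P_{W_{m+1}}\cdot P_{W_m}\cdots P_{W_{k+1}} = P_{W_n}\cdots P_{W_{k+1}}$, so that multiplying two such operators collapses the block algebra to scalar convolution: writing $a_i = (-1)^i\binom{r}{i}$ one finds $(D_N^r G_N)_{n,k} = (a * g)_{n-k}\, P_{W_n}\cdots P_{W_{k+1}}$, while $(\mathcal{I}_N - \mathcal{H}_N)_{n,k} = b_{n-k}\, P_{W_n}\cdots P_{W_{k+1}}$ with $b_0 = 1$ and $b_i = -h_i$ for $i\ge 1$. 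Hence \eqref{r-ord-def1} holds as soon as the scalar sequences satisfy $a * g = b$, i.e.\ in generating-function form $(1-z)^r \hat g(z) = 1 - \hat h(z)$. One must check that the truncation to an $N\times N$ array loses no terms, but since $a$ and $g$ are supported on nonnegative offsets and $a$ has width $r$, every index contributing to the convolution at entry $(n,k)$ lies in $[k,n]\subseteq[1,N]$, so the identity is exact for each $N$.

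It remains to solve $(1-z)^r \hat g(z) = 1 - \hat h(z)$ for a bounded $g$ and to control $\|G_N\|$. This is precisely where the sequences $\{n_j\},\{d_j\}$ of \cite{G} enter: the $d_j$ are the Lagrange weights at the nodes $n_j$ evaluated at $0$, so $\sum_j d_j\, q(n_j) = q(0)$ for every polynomial $q$ of degree $\le r-1$; applying this to the falling factorials shows that $1-\hat h$ vanishes to order $\ge r$ at $z=1$, hence is divisible by $(1-z)^r$, yielding a \emph{polynomial} and therefore finitely supported quotient $\hat g$. For the norm bound I would decompose $G_N = \sum_{i\ge 0} g_i E_i$, where $E_i$ carries blocks $P_{W_n}\cdots P_{W_{n-i+1}}$ on the $i$th subdiagonal; since $E_i$ acts as a generalized shift, $\|E_i w\|^2 = \sum_n \|P_{W_n}\cdots P_{W_{n-i+1}} w_{n-i}\|^2 \le \sum_n \|w_{n-i}\|^2 \le \|w\|^2$, so $\|E_i\|\le 1$ and thus $\sup_N \|G_N\| \le \|g\|_{\ell^1} < \infty$, independent of $N$. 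The main obstacle is the bookkeeping that reduces the noncommutative block identity \eqref{r-ord-def1} to the single scalar convolution $a*g=b$; once the projection-product structure is seen to be preserved under multiplication, the analytic heart (the order-$r$ zero at $z=1$) is inherited verbatim from the scalar theory of \cite{G}.
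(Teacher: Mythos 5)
Your proposal is correct, and while it proves the same two facts in the same overall order (feasibility first, then the $r$th order property), the technical core takes a genuinely different route from the paper. The feasibility step is identical to the paper's Step 1. For the $r$th order property, the paper computes the inverse powers $D_N^{-r}$ by induction (its Lemma \ref{DN-inv-r-lem}), defines $G_N = D_N^{-r}(\mathcal{I}_N - \mathcal{H}_N)$ so that \eqref{r-ord-def1} holds by construction, and then establishes bandedness of $G_N$ by importing Lemma \ref{sinan-lemma}, which is quoted from the proof of Proposition 6.1 in \cite{G}; you instead compute the forward powers $D_N^r$, make explicit the structural fact that both approaches rely on implicitly --- that ordered projection products concatenate, so all these block operators form an algebra isomorphic to scalar lower-triangular Toeplitz convolution --- and thereby reduce \eqref{r-ord-def1} to the polynomial divisibility $(1-z)^r \mid 1 - \hat h(z)$. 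Your proof of that divisibility, via the observation that $d_j = \prod_{i\neq j} \frac{n_i}{n_i - n_j}$ is exactly the Lagrange weight $\ell_j(0)$, so that $\hat h(1) = 1$ and $\hat h^{(k)}(1) = \sum_j d_j\, n_j(n_j-1)\cdots(n_j-k+1) = 0$ for $1 \le k \le r-1$, is a self-contained replacement for Lemma \ref{sinan-lemma}: indeed that lemma states precisely that the power-series coefficients of $(1-\hat h(z))/(1-z)^r$ vanish from index $K = n_r - r + 1$ onward, which is equivalent to your statement that the quotient $\hat g$ is a polynomial of degree $n_r - r$. Finally, your norm bound is both different and sharper than the paper's: the paper bounds each block entry by $(2+\epsilon)M_r$ and then runs a Cauchy--Schwarz argument over the band to get $\sup_N \|G_N\| \le (2+\epsilon)M_r K$, whereas your decomposition $G_N = \sum_{i\ge 0} g_i E_i$ into generalized shifts with $\|E_i\| \le 1$ gives $\sup_N \|G_N\| \le \|g\|_{\ell^1}$ in two lines. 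What the paper's route buys is that it never needs to argue that a candidate $G_N$ actually satisfies \eqref{r-ord-def1} (that holds by definition, since $G_N$ is defined by applying $D_N^{-r}$); what your route buys is independence from \cite{G}'s Proposition 6.1, an explicit generating-function picture, and a cleaner, $N$-independent operator norm bound. Both are complete proofs.
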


The proof of Theorem \ref{ord-r-thm} is given in Section \ref{pf-ord-r-thm-sec}.

\section{Background lemmas}

In this section, we collect background lemmas that are needed in the proof of Theorem \ref{ord-r-thm}.
The following result provides a formula for the entries of the block operator $(D_N)^{-r}$.
\begin{lemma} \label{DN-inv-r-lem}
Fix $r \geq 1$.  If $D_N$ is the $N\times N$ block operator defined by \eqref{DN-def} then $D_N$ is invertible and $D^{-r}_N$ satisfies
\begin{align}
(D_N^{-r})_{i,j} = \begin{cases} {r+i-j-1 \choose r-1}(P_{W_i}P_{W_{i-1}}\dots P_{W_{j+1}} )  \ \ &\text{if } i>j\\
I &\text{if } i=j\\
0 &\text{if }i<j.
\end{cases}
\label{DN-inv-r-eq}
\end{align}
\end{lemma}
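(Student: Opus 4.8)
The plan is to argue by induction on $r$, reducing the entire operator computation to a scalar recursion by exploiting the fact that the nonzero off-diagonal entries of every power $D_N^{-r}$ turn out to be scalar multiples of a single canonical product of projections. First I would settle invertibility and the base case $r=1$. Writing $D_N = \mathcal{I}_N - \mathcal{N}$, where $\mathcal{N}$ is the block operator with $(\mathcal{N})_{i,j} = P_{W_i}$ when $i=j+1$ and $0$ otherwise, the operator $\mathcal{N}$ is strictly lower triangular and hence nilpotent, so $D_N$ is invertible with the finite Neumann series $D_N^{-1} = \sum_{m=0}^{N-1}\mathcal{N}^m$. A direct index chase shows $(\mathcal{N}^m)_{i,j} = P_{W_i}P_{W_{i-1}}\cdots P_{W_{j+1}}$ when $i-j = m$ and $0$ otherwise, so that $(D_N^{-1})_{i,j}$ equals $P_{W_i}\cdots P_{W_{j+1}}$ for $i>j$, equals $I$ for $i=j$, and vanishes for $i<j$. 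This is exactly \eqref{DN-inv-r-eq} with $r=1$, since $\binom{i-j}{0}=1$.

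The key structural observation, and the step that handles the non-commutativity of the projections, is a concatenation identity. Writing $\Pi_{i,j} := P_{W_i}P_{W_{i-1}}\cdots P_{W_{j+1}}$ for $i>j$ and $\Pi_{j,j} := I$, one has $\Pi_{i,k}\Pi_{k,j} = \Pi_{i,j}$ for all $i \geq k \geq j$, since the two strings of projections simply splice together at the index $k$. Granting this, I would compute $D_N^{-r} = D_N^{-(r-1)}D_N^{-1}$ entrywise. Assuming inductively that $(D_N^{-(r-1)})_{i,k} = c_{r-1}(i,k)\,\Pi_{i,k}$ for $i \geq k$ (and $0$ otherwise) for scalars $c_{r-1}$, and using the $r=1$ computation together with the concatenation identity, every term in the sum over $k$ collapses onto the same string $\Pi_{i,j}$, yielding $(D_N^{-r})_{i,j} = \big(\sum_{k=j}^{i} c_{r-1}(i,k)\big)\Pi_{i,j}$. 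Thus each entry remains a scalar multiple of $\Pi_{i,j}$, and the coefficients obey the scalar recursion $c_r(i,j) = \sum_{k=j}^{i} c_{r-1}(i,k)$ with $c_1 \equiv 1$.

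It remains to solve this scalar recursion. Since the coefficients depend only on the offset $m = i-j$, the recursion becomes $c_r(m) = \sum_{\ell=0}^{m} c_{r-1}(\ell)$ with $c_1(m)=1$, whose solution is $c_r(m) = \binom{r+m-1}{r-1}$; this follows by induction from the hockey-stick identity $\sum_{\ell=0}^{m}\binom{\ell+r-2}{r-2} = \binom{m+r-1}{r-1}$ (equivalently, from the negative binomial expansion of $(1-x)^{-r}$). Substituting $m = i-j$ recovers the claimed coefficient $\binom{r+i-j-1}{r-1}$ and completes the induction. The only real subtlety I expect is the concatenation identity: without it the products of projection strings arising in the matrix multiplication would not obviously collapse to a single canonical string with scalar weight, and the clean reduction to the scalar (Toeplitz) computation would break down.
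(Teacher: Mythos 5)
Your proof is correct and follows essentially the same route as the paper's: induction on $r$ via the entrywise computation of $D_N^{-(r-1)}D_N^{-1}$, in which the projection strings splice together at the intermediate index and the hockey-stick identity $\sum_{\ell=0}^{m}\binom{\ell+r-2}{r-2}=\binom{m+r-1}{r-1}$ supplies the binomial coefficient. The only cosmetic differences are that you establish invertibility and the base case via a finite Neumann series (the paper cites a direct computation) and that you carry abstract scalar coefficients $c_r(m)$ through the induction and solve the recursion at the end, whereas the paper verifies the binomial formula directly at each step.
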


\begin{proof}
The proof proceeds by induction.  For the base case $r=1$, a direct computation shows that
\begin{align*}
\forall \ 1 \leq i, j \leq N,  \ \ \ (D_N^{-1})_{i,j} =
\begin{cases}
P_{W_i} P_{W_{i-1}} \cdots P_{W_{j+1}}  \ \ &\text{if } i>j\\
I &\text{if } i=j\\
0 &\text{if }i<j.
\end{cases}
\end{align*}

For the inductive step, suppose that \eqref{DN-inv-r-eq} holds.  
Using $(D_N^{-(r+1)})_{i,j} = \sum_{k = 1}^N (D_N^{-r})_{i,k} (D_N^{-1})_{k,j},$ 
shows that $(D_N^{-(r+1)})_{i,i} =I$, and that if $i<j$ then $(D_N^{-(r+1)})_{i,j} =0$.
If $i>j$, then
\begin{align} \label{DN-i-ge-j}
(D_N^{-(r+1)})_{i,j} =  \sum_{k = j}^i (D_N^{-r})_{i,k} (D_N^{-1})_{k,j}
=  \left(\sum_{k = j}^i {r+i-k-1 \choose r-1}\right)(P_{W_i}P_{W_{i-1}}\dots P_{W_{j+1}} ).
\end{align}
The combinatorial identity $\sum_{k=0}^{i-j} {r-1+k \choose k} = \sum_{k=0}^{i-j}{r-1+k \choose r-1} = {r + i -j \choose r}$, e.g., see page 1617 in \cite{G}, shows that
$\sum_{k = j}^i {r+i-k-1 \choose r-1} = \sum_{k=j}^i {r-1+i-k \choose i-k} = \sum_{k=0}^{i-j} {r-1+k \choose k} = {r + i -j \choose r}.$
In particular, \eqref{DN-i-ge-j} reduces to $(D_N^{-(r+1)})_{i,j} = {r + i -j \choose r} (P_{W_i}P_{W_{i-1}}\dots P_{W_{j+1}} )$ when $i>j$.
\end{proof}

\begin{lemma}  \label{sinan-lemma}
Fix $\sigma \in \mathbb N$, $r \in \mathbb N$, and define $\{h_l\}_{l\in \N}$, $\{n_j\}_{j=1}^r$, $\{d_j\}_{j=1}^r$ by \eqref{nj-dj-defs} and \eqref{h-def}.
If $n\geq n_r - r + 1$, then
\begin{align}  \label{sinan-lemma-eq}
{r+n-1 \choose r-1} = \sum_{l=1}^n {r+n-1 -l\choose r-1} h_l.
\end{align}
\end{lemma}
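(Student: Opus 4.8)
The plan is to recognize the coefficients $d_j$ as Lagrange interpolation weights and to reduce the claimed identity to the exactness of polynomial interpolation in degree at most $r-1$. The starting observation is that if $\ell_j(x) = \prod_{i \in \mathbb{N}_r \setminus \{j\}} \frac{x - n_i}{n_j - n_i}$ denotes the Lagrange basis polynomial associated with the nodes $\{n_1, \dots, n_r\}$, then evaluating at $x=0$ gives exactly $\ell_j(0) = \prod_{i \neq j} \frac{n_i}{n_i - n_j} = d_j$ by \eqref{nj-dj-defs}. I would then introduce the polynomial $P(x) = \binom{x+r-1}{r-1} = \frac{1}{(r-1)!}\prod_{i=1}^{r-1}(x+i)$, which has degree $r-1$ and recovers the binomial coefficients in \eqref{sinan-lemma-eq} in the sense that $P(m) = \binom{r+m-1}{r-1}$ whenever $m$ is an integer with $m \geq -(r-1)$ (both sides vanish for $-(r-1) \leq m \leq -1$ and agree for $m \geq 0$).

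The main step is the exact quadrature identity: for every polynomial $Q$ of degree at most $r-1$ one has $Q(0) = \sum_{j=1}^r d_j\, Q(n_j)$. This follows immediately from the fact that interpolation at the $r$ distinct nodes $\{n_1,\dots,n_r\}$ reproduces $Q$ exactly, namely $Q = \sum_{j=1}^r Q(n_j)\ell_j$, upon evaluating at $x=0$ and using $\ell_j(0)=d_j$. Applying this to the degree-$(r-1)$ polynomial $Q(t) = P(n-t)$, for each fixed integer $n$, yields $P(n) = Q(0) = \sum_{j=1}^r d_j\, Q(n_j) = \sum_{j=1}^r d_j\, P(n - n_j)$, which is the desired relation at the level of the polynomial $P$.

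It remains to translate this polynomial identity into the combinatorial statement \eqref{sinan-lemma-eq}, and this is where the hypothesis $n \geq n_r - r + 1$ enters and where the bookkeeping must be done carefully. Since $h$ is supported on $\{n_1,\dots,n_r\}$ with $h_{n_j} = d_j$, the right-hand side of \eqref{sinan-lemma-eq} equals $\sum_{j : n_j \leq n} d_j \binom{r+n-1-n_j}{r-1}$. For the indices with $n_j > n$ dropped from the sum over $l \leq n$, the hypothesis guarantees $n_j \leq n_r \leq n + r - 1$, so the argument $r+n-1-n_j$ lies in $\{0,\dots,r-2\}$ and the corresponding binomial coefficient vanishes; hence those terms contribute nothing and $\sum_{j : n_j \leq n} d_j \binom{r+n-1-n_j}{r-1} = \sum_{j=1}^r d_j \binom{r+n-1-n_j}{r-1}$. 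The same hypothesis ensures $n - n_j \geq n - n_r \geq -(r-1)$ for every $j$, so $P(n-n_j) = \binom{r+n-1-n_j}{r-1}$ and likewise $P(n) = \binom{r+n-1}{r-1}$. Substituting these identifications into the polynomial identity of the previous paragraph gives \eqref{sinan-lemma-eq}.

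I expect the genuinely delicate point to be this last translation: the polynomial $P$ does not vanish for arguments below $-(r-1)$, whereas the combinatorial binomial coefficient does, so the identity would fail without the constraint $n \geq n_r - r + 1$. Verifying that this threshold is exactly the one forcing every evaluation $P(n-n_j)$ into the range where $P$ and the combinatorial binomial coefficient agree, while simultaneously making the truncated terms vanish, is the crux. Equivalently, one could phrase the whole argument through generating functions: the identity is the statement that $(1 - H(z))/(1-z)^r$ is a polynomial, where $H(z) = \sum_{j=1}^r d_j z^{n_j}$, i.e.\ that $1 - H(z)$ has a zero of order $r$ at $z=1$, which is itself encoded by the quadrature identity above.
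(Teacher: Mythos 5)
Your proof is correct and complete, but it follows a genuinely different route from the paper's. The paper gives only a sketch: it cites Proposition 6.1 of \cite{G}, extracting from steps (i) and (ii) of that proof the statement that $g_n = {n+r-1 \choose r-1} - \sum_{n_j \leq n} d_j {n-n_j+r-1 \choose r-1}$ factors as $\left(\prod_{i=1}^r n_i\right) G(n)$ with $G(n)=0$ once $n \geq n_r-r+1$; G\"unt\"urk's underlying argument is essentially the generating-function formulation you mention only in passing at the end (the $d_j$ are chosen so that $1-H(z)$ vanishes to order $r$ at $z=1$, making $(1-H(z))/(1-z)^r$ a polynomial, whence the convolution identity holds for all sufficiently large $n$). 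You instead give a self-contained elementary proof: you recognize $d_j=\ell_j(0)$ as Lagrange weights at the nodes $n_1,\dots,n_r$, invoke exactness of degree-$(r-1)$ interpolation for $Q(t)=P(n-t)$ with $P(x)=\frac{1}{(r-1)!}\prod_{i=1}^{r-1}(x+i)$, and then handle the one delicate point correctly: the hypothesis $n\geq n_r-r+1$ is exactly what places every argument $n-n_j$ in the range $\geq -(r-1)$ where the polynomial $P$ and the combinatorial binomial coefficient agree (both vanish on $\{-(r-1),\dots,-1\}$), so dropping the terms with $n_j>n$ from the truncated sum costs nothing. Your approach buys transparency: it shows the identity requires only that the $n_j$ be distinct positive integers, not the specific quadratic form in \eqref{nj-dj-defs}, and it makes the threshold $n_r-r+1$ visibly sharp. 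The paper's citation buys brevity and keeps the construction tied to \cite{G}, from which the paper also imports the companion bound \eqref{h-ell-one} on $\|h\|_{\ell^1}$ — a property your interpolation argument does not address, though the lemma itself does not need it.
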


\noindent {\em Sketch of Proof.} The result is contained in the proof of Proposition 6.1 in \cite{G}.  We provide a brief summary since \cite{G} proves a more general result.

First, note that $\{ n_j\}_{j=1}^r$ is an increasing sequence of strictly positive, distinct integers, which satisfies the requirements of Proposition 6.1 in \cite{G}.
The final sentence in step (i) of the proof Proposition 6.1 in \cite{G} shows that
\begin{equation*}
{n+r-1 \choose r-1} - \sum_{n_j \leq n} d_j {n -n_j +r-1 \choose r-1}=  g_n,
\end{equation*}
where $g_n$ is given by (6.1) in \cite{G}.  Moreover, the first two sentences in step (ii) in the proof of Proposition 6.1 in \cite{G} give
that $g_n= \left( \prod_{i=1}^r n_i \right) G(n)$ where $G(n)=0$ whenever $n\geq n_r-r+1$.
Finally, recalling the definition $h_l$ in \eqref{h-def} gives the desired conclusion.
\qed

\section{Proof of Theorem \ref{ord-r-thm}} \label{pf-ord-r-thm-sec}

In this section we prove Theorem \ref{ord-r-thm}.\\

\noindent {\em Step 1.}  We first show that the operators $H_{n,j} : \bigoplus ^N_{n=1} W_n \to \bigoplus ^N_{n=1} W_n$ defined by \eqref{nj-dj-defs}, \eqref{h-def}, \eqref{H-def}  satisfy \eqref{epsilon-H-bnd} when $\sigma \in \N$ is sufficiently large. 

Note that $f(x) = \cosh(x)$ is decreasing on $(0, \infty)$ and $\lim_{x \to 0+} \cosh(x) = 1$.
Given $\epsilon>0$, it follows from \eqref{h-ell-one} that there exists $N_0=N_0(\epsilon)$ so that $\sigma > N_0$ implies 
\begin{equation} \label{h-one-plus-eps}
\|h\|_{\ell^1} < \cosh(\pi \sigma^{-1/2}) < 1 + \epsilon.
\end{equation}
By \eqref{H-def} we have
\begin{align} \label{Hhbnd}
\sup_{n\geq 1}\sum^L_{j=1} \| H_{n,j}\| = \sup_{n \geq 1}\sum^L_{j=1} \|h_j P_{W_n} P_{W_{n-1}} \dots P_{W_{n-j+1}}\|
\leq \sup_{n \geq 1}\sum^L_{j=1} |h_j| \leq \|h\|_{\ell^1} < 1+\epsilon .
\end{align}

\noindent {\em Step 2.}  Define the $N \times N$ block operator $G_N = (D_N^{-r})(\mathcal{I}_N - \mathcal{H}_N)$.  Using \eqref{HN-entries}, \eqref{H-def}, Lemma \ref{DN-inv-r-lem},
and $(G_N)_{i,j} =\sum_{k=1}^N (D_N^{-r})_{i,k} (\mathcal{I}_N - \mathcal{H}_N)_{k,j}$ it can be shown that
\begin{equation}
(G_N)_{i,j} = 
\begin{cases} 
\left({r+i-j-1 \choose r-1} - \sum_{l=1}^{i-j} {r+i-j-l-1 \choose r-1}h_l
\right)\ 
(P_{W_i} P_{W_{i-1}} \cdots P_{W_{j+1}}) &\text{ if } i > j, \\ 
I &\text{ if } i = j, \\ 
0 &\text { if } i < j. \\
\end{cases}
\end{equation}

Let $K=n_r-r+1$.  Lemma \ref{sinan-lemma} shows that if $i-j\geq K$ then ${r+i-j-1 \choose r-1} = \sum_{l=1}^{i-j} {r+i-j-l-1 \choose r-1}h_l$.
This shows that $G_N$ is banded and satisfies
\begin{equation} \label{GN-banded-entries}
(G_N)_{i,j} = 
\begin{cases} 
\left({r+i-j-1 \choose r-1} - \sum_{l=1}^{i-j} {r+i-j-l-1 \choose r-1}h_l
\right)\ 
(P_{W_i} P_{W_{i-1}} \cdots P_{W_{j+1}}) &\text{ if } K> i -j>0, \\ 
I &\text{ if } i = j, \\ 
0 &\text { otherwise.} \\
\end{cases}
\end{equation}
\vspace{.1in}

\noindent {\em Step 3.}  Recall that $K = n_r-r+1$ and let $M_r = {r+K-2 \choose r-1}.$ We next show that
if $1 \leq i,j \leq N$ and $0<i-j<K$ then
\begin{equation} \label{GN-op-bnd-unif}
\| (G_N)_{i,j} \| \leq (2+\epsilon)M_r.
\end{equation}

Since ${r+m-2 \choose r-1}$ increases as $m$ increases, it follows that if $0<i-j<K$ then
${r+i-j-1 \choose r-1} \leq {r+K-2 \choose r-1}=M_r $.  Likewise, if $1 \leq l \leq i-j<K$ then${r+i-j-l-1 \choose r-1} \leq {r+K-2 \choose r-1}=M_r $.
Also, recall that by \eqref{h-one-plus-eps} we have $\|h \|_{\ell^1} < 1 + \epsilon$.  
So, if $0<i-j<K$ then \eqref{GN-banded-entries} implies that
\begin{align*}
\|(G_N)_{i,j} \| & \leq \left| {r+i-j-1 \choose r-1} \right| + \left| \sum_{l=1}^{i-j} {r+i-j-l-1 \choose r-1}h_l \right| \\
&\leq M_r + M_r   \sum_{l=1}^{i-j} | h_l| \leq M_r + M_r  \|h\|_{\ell^1} \leq (2+\epsilon)M_r.
\end{align*}

\noindent {\em Step 4.}  
Next, we prove that
\begin{equation}
\sup_{N>K} \|G_N\|  \leq (2+\epsilon) M_r K< \infty.
\end{equation}
Suppose that $v \in \bigoplus ^N_{n=1} W_n$ satisfies $\|v\|_2=1$.
By \eqref{GN-banded-entries}, it follows that $G_Nv$ satisfies
\begin{equation} \label{GNv-entries}
\forall \ 1 \leq n \leq N, \ \ \ (G_Nv)_n = 
\begin{cases}
v_n + \sum_{k=1}^{n-1} G_{n,k} v_k & \hbox{ if } 1 \leq n \leq K,\\
v_n + \sum_{k=n+1-K}^{n-1} G_{n,k} v_k & \hbox{ if } K+1 \leq n \leq N.
\end{cases}
\end{equation}
Using  \eqref{GN-op-bnd-unif}, \eqref{GNv-entries}, and noting that $(2+\epsilon)M_r \geq 1$, gives
\begin{align}
\|G_N v\|^2 &= \sum_{n=1}^K \|v_n +\sum_{k=1}^{n-1} G_{n,k} v_k \|^2 + \sum_{n=K+1}^N \|v_n + \sum_{k=n+1-K}^{n-1} G_{n,k} v_k\|^2 \notag \\
& \leq (2+\epsilon)^2 M_r^2 \sum_{n=1}^K \left( \|v_n\| +\sum_{k=1}^{n-1} \|v_k \| \right)^2 + (2+\epsilon)^2 M_r^2 \sum_{n=K+1}^N \left( \|v_n\| + \sum_{k=n+1-K}^{n-1} \|v_k\| \right)^2 \notag \\
&= (2+\epsilon)^2 M_r^2 \left[  \sum_{n=1}^K \left( \sum_{k=1}^{n} \|v_k \| \right)^2 + \sum_{n=K+1}^N \left( \sum_{k=n+1-K}^{n} \|v_k\| \right)^2 \right]. \label{GNv-eq}
\end{align}
Applying the Cauchy-Schwarz inequality to \eqref{GNv-eq} gives
\begin{align}
\|G_N v\|^2 &\leq (2+\epsilon)^2 M_r^2 \left[  \sum_{n=1}^K n \sum_{k=1}^{n} \|v_k \|^2 + \sum_{n=K+1}^N K  \sum_{k=n+1-K}^{n} \|v_k\|^2 \right]\notag \\
&\leq (2+\epsilon)^2 M_r^2 K \left[  \sum_{n=1}^K \sum_{k=1}^{n} \|v_k \|^2 + \sum_{n=K+1}^N  \sum_{k=n+1-K}^{n} \|v_k\|^2 \right]. \label{GNv-eq-afterCS}
\end{align}
Next, a computation shows that
\begin{equation} \label{reindexed-sum-eq}
\sum_{n=K+1}^N  \sum_{k=n+1-K}^{n} \|v_k\|^2 = \sum_{n=1}^K  \sum_{k=n+1}^{N-K+n} \|v_{k}\|^2.
\end{equation}
Combining \eqref{GNv-eq-afterCS} and \eqref{reindexed-sum-eq} completes the proof
\begin{align*}
\|G_N v\|^2 &\leq (2+\epsilon)^2 M_r^2 K  \ \sum_{n=1}^K \sum_{k=1}^{N-K+n} \|v_k \|^2  \leq (2+\epsilon)^2 M_r^2 K^2 \sum_{k=1}^N \|v_k \|^2 =(2+\epsilon)^2 M_r^2 K^2.
\end{align*}
\qed

\section{Reconstruction and error bounds} \label{rec-sec}

In this section, we describe the choice of left inverse $\mathcal{L}$ used for the reconstruction step \eqref{lin-rec}.
Combining the error expression \eqref{alg-err} with the $r$th order condition \eqref{r-ord-def1} gives
\begin{align} \label{err-LDGv}
x - \widetilde{x} = \mathcal{L} (\mathcal{I}_N - \mathcal{H}_N)v = \mathcal{L} D_N^{r} G_N v.
\end{align}
If $T:\R^d \to \bigoplus_{n=1}^N W_n$ is the analysis operator of the unweighted fusion frame $\{W_n\}_{n=1}^N$,
we seek a left inverse $\mathcal{L}:\bigoplus_{n=1}^N W_n \to \R^d$ that satisfies $\mathcal{L} T = I$ and for which the quantization error  in \eqref{err-LDGv} is small.

By the stability result in Theorem \ref{stab-thm}, the state variable $v$ satisfies $\|v\|_\infty  \leq C<\infty$ and $\|v\| \leq \sqrt{N} \|v\|_\infty$. 
Also, the $r$th order condition \eqref{r-ord-def2} ensures that  $C^\prime=\sup_N \|G_N\| <\infty$ is finite.
So
\begin{align}
\|x - \widetilde{x}\| & \leq \| \mathcal{L} D_N^{r} G_N \| \thinspace \| v\| \notag \\
& \leq \| \mathcal{L} D_N^{r}\| \thinspace \| G_N \| \thinspace \| v\| \notag  \\
& \leq C^\prime  \sqrt{N} \thinspace \| \mathcal{L} D_N^{r}\|\| v\|_\infty \notag \\
& \leq C^\prime C \sqrt{N}  \| \mathcal{L} D_N^{r}\|. \label{err-LDr}
\end{align}

In view of \eqref{err-LDr}, the work in \cite{JJAP} selected $\mathcal{L}$ as a left inverse to $T$ that makes $\| \mathcal{L} D_N^{r}\|$ small.
The {\em $r$th order Sobolev left inverse} is defined by
\begin{equation} \label{sob-inv}
\mathcal{L}_{r, Sob} = ((D_N^{-r}T)^*D_N^{-r}T)^{-1}(D_N^{-r}T)^*D_N^{-r}.
\end{equation}
It is easily verified that $\mathcal{L}_{r, Sob} \thinspace T =I$; see \cite{JJAP} for further discussion of Sobolev duals in the setting of fusion frames.

In general, it can be difficult to bound the operator norm $\| \mathcal{L} D_N^{r}\|$ in \eqref{err-LDr}.  It would be interesting to find quantitative bounds on 
$\| \mathcal{L} D_N^{r}\|$ when $\mathcal{L}$ is the Sobolev left inverse for nicely structured classes of deterministic or random fusion frames.   
For perspective,  \cite{BLPY, FK, GLPSY, KSW, SWY} provides analogous bounds for the scalar-valued setting of frames, 
and \cite{JJAP} contains examples for fusion frames when $\mathcal{L}$ is the canonical left inverse.

\section{Numerical experiments}

This section contains two numerical examples which illustrate the performance of the low-bit fusion frame Sigma-Delta algorithm.
For each $N\geq 3$, define the unit-vectors $\{\varphi^N_n\}_{n=1}^N \subset \R^3$ by
$$\varphi_n^N =  \begin{pmatrix} \frac{1}{\sqrt{3}}, & \sqrt{\frac{2}{3}}\cos(\frac{2\pi n}{N}), &  \sqrt{\frac{2}{3}}\sin(\frac{2\pi n}{N}) \end{pmatrix},$$
and define the unweighted fusion frame $\mathcal{W}_N = \{W^N_n\}_{n=1}^N$ by
$$W_n^N = \{ x \in \R^3 : \langle x, \varphi_n^N \rangle = 0 \}.$$

For each fixed $N \geq 3$,  $\{W_n^N\}_{n=1}^N$ is an unweighted tight fusion frame for $\mathbb R^3$ with fusion frame bound $A=A_N= \frac{2N}{3}$, e.g., see Example 1 in \cite{JJAP}.
Moreover, the vectors $e_{1,n}^N = \left(0, \sin(\frac{2\pi n}{N}), -\cos(\frac{2\pi n}{N})\right)$ and 
$e_{2,n}^N = \sqrt{\frac{1}{3}} \left(-\sqrt{2} , \cos(\frac{2\pi n}{N}), \sin(\frac{2\pi n}{N})\right)$ form an orthonormal basis for $W_n^N$.

Let $\mathcal A_n^N \subset W_n^N$ be the low-bit quantization alphabet given by Lemma \ref{simplex-lemma}.  Since ${\rm dim}(W_n^N) = 2$,
each alphabet $\mathcal A_n^N$ contains 3 elements, and can be defined by
$$\mathcal A_n^N = \left\{e_{1,n}^N, \left(-\frac{1}{2} e_{1,n}^N + \frac{\sqrt{3}}{2} e_{2,n}^N\right), \left(-\frac{1}{2}e_{1,n}^N - \frac{\sqrt{3}}{2} e_{2,n}^N\right)\right\}.$$
Let $Q_n$ be a vector quantizer associated to $\mathcal{A}_n^N$ by \eqref{vect-quant}.

\subsection*{Example 1 (second order algorithm)}
This example considers the performance of the second order fusion frame Sigma-Delta algorithm.
 By Theorems \ref{stab-thm} and \ref{ord-r-thm} we can choose appropriate $\sigma$ and $h$, as in Section \ref{rth-order-section}, to ensure
 that the algorithm \eqref{ffsd-eq1}, \eqref{ffsd-eq2} is stable and second order. 
 In Theorem \ref{stab-thm}, let $\delta = 0.1$, so that $\alpha_1 \approx 1.1015$ and $\alpha_2 \approx 1.2198$ allows us to pick $\alpha = 1.101$. 
Using  \eqref{h-ell-one} and \eqref{Hhbnd}, the condition $\sup_{n \geq 1} \sum_{j=1}^L \| H_{n,j}\| \leq \|h\|_{\ell^1}< \alpha$ will be satisfied if
 $\pi \sigma^{-1/2} = \cosh^{-1}(\alpha) = \ln(\alpha + \sqrt{\alpha^2 -1}) \leq 0.4458$, which occurs when $\sigma \geq 49.67$.  We pick $\sigma = 50$, so  that
 \eqref{nj-dj-defs} gives $n_1 = 1, n_2 = 51$, and \eqref{h-def} gives
\begin{align}
h_j = 
\begin{cases}
\frac{n_2}{n_2 - n_1} = \frac{51}{50} \ \ \ \ \ &\text{if } j = n_1,\\
\frac{n_1}{n_1 - n_2} = -\frac{1}{50} \ \ \ &\text{if } j = n_2,\\
0  \ \ \ \  \ \ \ \ \ &\text{otherwise}.
\end{cases}
\end{align}
The second order low-bit $\Sigma\Delta$ quantization algorithm takes the following form
\begin{align}
&q^N_n = Q_n\left(y^N_n + h_1 P_{W^N_n}(v^N_{n-1}) + h_{n_2} \prod_{k=0}^{n_2-1} P_{W^N_{n - k}}(v^N_{n-n_2})\right), \\
&v^N_n = y^N_n - q^N_n+ \left( h_1 P_{W^N_n}(v^N_{n-1}) + h_{n_2} \prod_{k=0}^{n_2-1} P_{W^N_{n - k}}(v^N_{n-n_2}) \right).
\end{align}

Let $x = (\frac{1}{25}, \frac{\pi}{57}, \frac{1}{2\sqrt{57}})$ 
and define the fusion frame measurements by $y_n^N = P_{W_n^N}(x)$. 
Note that $\|y_n\| \leq \|x\| \leq \delta$.
Run the second order low-bit fusion frame Sigma-Delta algorithm with inputs $\{y_n^N\}_{n=1}^N$, to obtain the quantized outputs $q^N = \{q^N_n\}_{n=1}^N$.

Let $T_N$ be the analysis operator for the unweighted fusion frame $\{W_n^N\}_{n=1}^N$.  The canonical left inverse of $T_N$ is $\mathcal{L}_N = S_N^{-1} T_N^* = (A_N^{-1}I) T_N^* = A_N^{-1}T_N^*$.  
Since the fusion frame  is tight with bound $A_N = 2N/3$,
it follows that $\mathcal{L}_N = \frac{3}{2N} T_N^*$, e.g., \cite{JJAP}. 
Also let $\mathcal{L}^N_{2, Sob}$ be the second order Sobolev left inverse of $T_N$, as defined in \eqref{sob-inv}. 
Consider the following two different methods of reconstructing a signal from $q^N$
$$\widetilde{x}_N = \mathcal{L}_N(q^N) \ \ \ \hbox{ and } \ \ \ \widetilde{x}_{N,Sob} = \mathcal{L}^N_{2,Sob}(q^N).$$

\begin{figure} 
\centering
\includegraphics[scale=0.3]{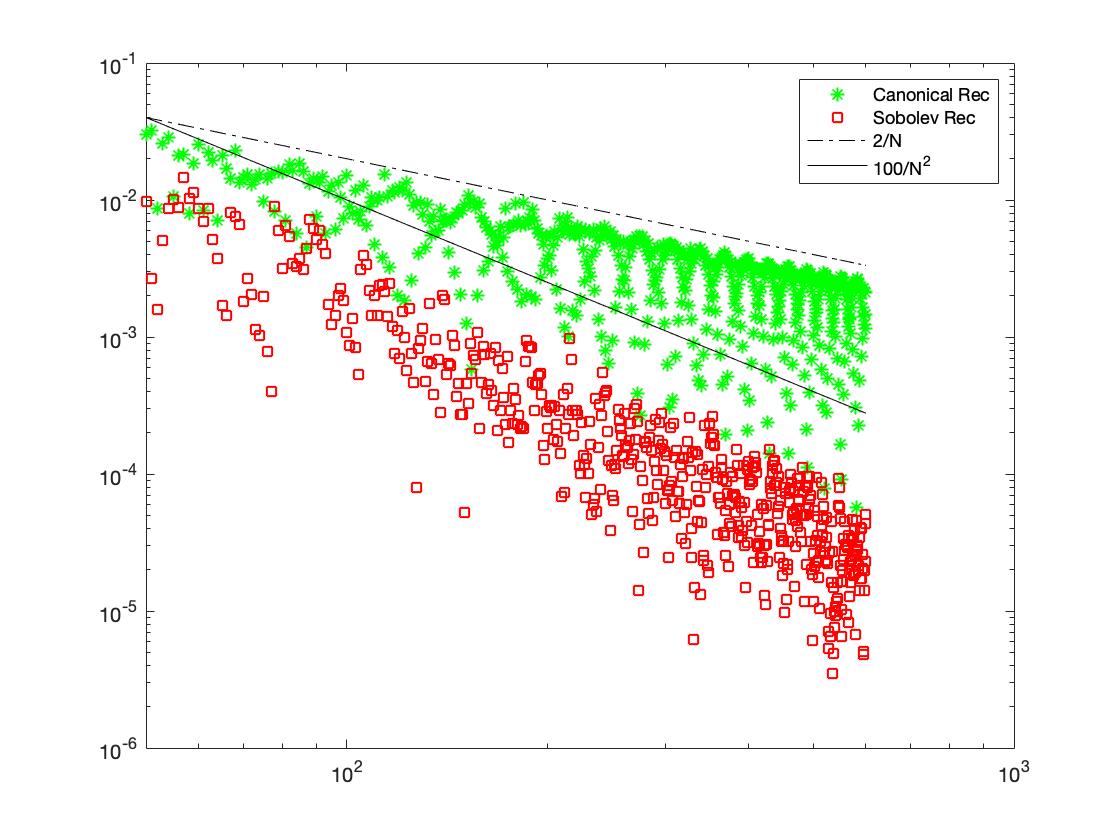}
\caption[]
{\label{fig1} Error for the second order algorithm in Example 1.}
\end{figure}
Figure \ref{fig1} shows log-log plots of $\|x - \widetilde{x}_N\|$ and $\|x - \widetilde{x}_{N,Sob}\|$ against $N$. For comparison, log-log plots of $2/N$ and $100/N^2$ against $N$ are also given.

\subsection*{Example 2 (third order algorithm)}

We consider the same experiment as in Example 1, except that we use an algorithm of order $r = 3$.

We again use the parameters $\delta =0.1$ and $\sigma =50$.  Using \eqref{nj-dj-defs} with $\sigma=50$ and $r=3$ gives $n_1 = 1, n_2 = 51, n_3 = 201$ and
\begin{align}
h_j = 
\begin{cases}
\frac{n_2n_3}{(n_2 - n_1)(n_3 - n_1)}  \ \ \ \ \ &\text{if } j = n_1,\\
\frac{n_1n_3}{(n_1 - n_2)(n_3 - n_2)}  \ \ \ &\text{if } j = n_2, \\
\frac{n_1n_2}{(n_1 - n_3)(n_2 - n_3)}  \ \ \ &\text{if } j = n_3, \\
0  \ \ \ \  \ \ \ \ \ &\text{otherwise}.
\end{cases}
\end{align}
The third order low-bit fusion frame Sigma-Delta quantization algorithm takes the following form
\begin{align}
&q^N_n = Q_n\left(y^N_n + h_1 P_{W^N_n}(v^N_{n-1}) + h_{n_2} \prod_{k=0}^{n_2-1} P_{W^N_{n - k}}(v^N_{n-n_2}) + h_{n_3} \prod_{k=0}^{n_3-1} P_{W^N_{n - k}}(v^N_{n-n_3})\right), \\
&v^N_n = y^N_n - q^N_n+ \left( h_1 P_{W^N_n}(v^N_{n-1}) + h_{n_2} \prod_{k=0}^{n_2-1} P_{W^N_{n - k}}(v^N_{n-n_2}) + h_{n_3} \prod_{k=0}^{n_3-1} P_{W^N_{n - k}}(v^N_{n-n_3}) \right).
\end{align}
Let $x = (\frac{1}{25}, \frac{\pi}{57}, \frac{1}{2\sqrt{57}})$, and use the third order fusion frame Sigma-Delta algorithm 
with inputs $\{y_n^N\}_{n=1}^N$ to obtain the quantized outputs $q^N = \{q^N_n\}_{n=1}^N$.
For the reconstruction step, let $\mathcal{L}_N = \frac{3}{2N} T^*_N$ be the canonical left inverse of $T_N$ and let $\mathcal{L}_{3,Sob}^N$ be the third order Sobolev left inverse of $T_N$, as defined in \eqref{sob-inv}.   We consider the following two different methods of reconstructing a signal from $q^N$
$$\widetilde{x}_N = \mathcal{L}_N(q^N) \ \ \ \hbox{ and } \ \ \ \widetilde{x}_{N,Sob} = \mathcal{L}^N_{3,Sob}(q^N).$$

\begin{figure}
\centering
\includegraphics[scale=0.3]{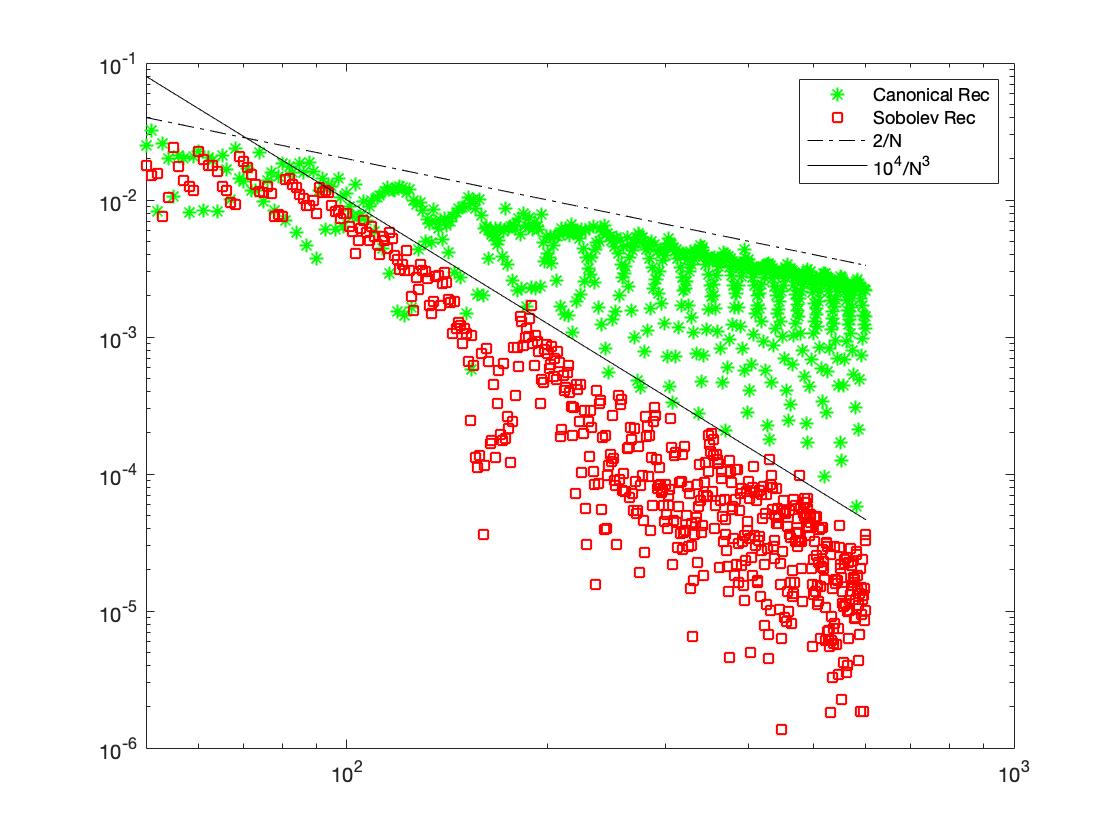}
\caption[]
{\label{fig2} Error for the third order algorithm in Example 2.}
\end{figure}

Figure 2 shows log-log plots of $\|x - \widetilde{x}_N\|$ and $\|x - \widetilde{x}_{N,Sob}\|$ against $N$. For comparison, log-log plots of $2/N$ and $10^4/N^3$ against $N$ are also given.

\section{Outlook}
In this paper we have discussed higher order Sigma-Delta modulators for fusion frame measurements and proved their stability. As for finite frames, the reconstruction accuracy of such approaches will depend on the fusion frame under consideration. In particular, we expect that for certain adversarial fusion frame constructions, only very slow error decay can be achieved. 

On the other hand, our numerical experiments in the previous section show that for certain deterministic fusion frames the error decays polynomially of an order that corresponds to the order of the Sigma-Delta scheme.  
For random frames, such error bounds have been established with high probability \cite{GLPSY, KSY,FK}. These result have important implications
for compressed sensing with random measurement matrices.  Given that the theory of compressed sensing generalizes to the setting of fusion frames \cite{BKR},
and there exists analysis of random fusion frames which parallels the restricted isometry property \cite{Bod}; it will be interesting to understand if the aforementioned results  generalize to the stable low-bit $r$th order 
fusion frame Sigma-Delta algorithms discussed in this paper, or whether modifications are necessary. The crucial  quantity to estimate is the last factor in \eqref{err-LDr} for the Sobolev dual of a random fusion frame. In any case, we expect that the stability analysis provided in this paper will be of crucial importance even in the latter case.

\section*{Acknowledgements}
The authors thank Keaton Hamm and Jiayi Jiang for valuable discussions related to the material.
F.~Krahmer was suported in part by the German Science Foundation in the context of the Emmy Noether junior research group KR4512/1-1.
A.~Powell was supported in part by NSF DMS Grant 1521749, and gratefully acknowledges the Academia Sinica Institute of Mathematics (Taipei, Taiwan) for their hospitality and support.

\bibliographystyle{amsplain}

\begin{thebibliography}{10}


\bibitem{AKMPST-2005} R. Adler, B. Kitchens, N. Martens, C. Pugh, M. Shub, C. Tresser, Convex dynamics and applications,  Ergodic Theory and Dynamical Systems {25}, 321--352 (2005).

\bibitem{ANSTW-2015} R. Adler, T. Nowicki, G. \'{S}wirszcz, C. Tresser, S. Winograd,
Error diffusion on acute simplices: geometry of invariant sets, Chaotic Modelling and Simulation {1}, 3--26 (2015).

\bibitem{ANST-2010} R. Adler, T.G. Nowicki, \'{S}wirszcz, C. Tresser, Convex dynamics with constant input,  Ergodic Theory and Dynamical Systems {30}, 957--972 (2010).

\bibitem{BPY} J. Benedetto, A.M. Powell and  \"O. Y{\i}lmaz, Sigma-Delta ($\Sigma\Delta$) quantization and finite frames.
IEEE Transactions on Information Theory {52}, 1990--2005 (2006)

\bibitem{BOT} J. Benedetto, O. Oktay and A. Tangboonduangjit, Complex sigma-delta quantization algorithms for finite frames.  In: \'Olafsson, G., Grinberg, E., Larson, D., Jorgesen, P., Massopust, P., Quinto, E., Rudin, B. (eds.) Radon transforms, geometry, and wavelets, pp. 27--49. Contemporary Mathematics {464}, American Mathematical Society, Providence, RI, (2008)

\bibitem{BLPY} J. Blum, M. Lammers, A.M. Powell and \"O. Y{\i}lmaz Sobolev duals in frame theory and Sigma-Delta quantization.  
Journal of Fourier Analysis and Applications {16}, 365--381 (2010)

\bibitem{Bod} B. Bodmann, Random fusion frames are nearly equiangular and tight. Linear Algebra and its Applications 439, no. 5, 1401--1414 (2013).

\bibitem{BodPau} B. Bodmann and V. Paulsen, Frame paths and error bounds for sigma-delta quantization. Applied and Computational Harmonic Analysis 22, no. 2, 176--197 (2007).

\bibitem{BKR} P. Boufounos, G. Kutyniok and H. Rauhut, Holger, Sparse recovery from combined fusion frame measurements.
IEEE Transactions on Information Theory 57, no. 6, 3864--3876 (2011). 

\bibitem{CG} E. Chou and C.S.~G\"unt\"urk,
Distributed Noise-Shaping Quantization: I. Beta Duals of Finite Frames and Near-Optimal Quantization of Random Measurements.
Constructive Approximation  {44}, no. 1,  1–-22 (2016).


\bibitem{CK} P.G. Casazza and G. Kutyniok, Frames of subspaces, In: C. Heil, P. Jorgensen and D. Larson (eds.) Wavelets, frames and operator
theory, pp. 87--113, Contemporary Mathematics {345}, American Mathematical Society, Providence, RI (2004).

\bibitem{CKL} P.G. Casazza, G. Kutyniok and S. Li, Fusion frames and distributed processing.  Applied and Computational Harmonic Analysis {25}, 114--132 (2008). 

\bibitem{CKLR} P.G. Casazza, G. Kutyniok, S. Li and C. Rozell, Modelling sensor networks with fusion frames.  Wavelets XII (San Diego, CA, 2007), SPIE Proceedings 6701, SPIE, Bellingham, WA (2007).

\bibitem{DD} I. Daubechies and  R. DeVore,  Approximating a bandlimited function using very coarsely quantized data: A family of stable sigma-delta modulators of arbitrary order. Annals of Mathematics {158}, 679--710 (2003).

\bibitem{DGK}
P. Deift, C.S.~G\"unt\"urk and F. Krahmer,
An optimal family of exponentially accurate Sigma-Delta quantization schemes,
Communications on Pure and Applied Mathematics, Vol. LXIV, 0883--0919 (2011).

\bibitem{FK}
J.-M. Feng and F. Krahmer, 
An RIP-based approach to $\Sigma\Delta$ quantization for compressed sensing,
IEEE Signal Processing Letters 21, no. 11, 1351--1355 (2014).

\bibitem{FKS}
J.-M. Feng, F. Krahmer and R. Saab,
Quantized compressed sensing for random circulant matrices.
Applied and Computational Harmonic Analysis {47}, no. 3, 1014--1032 (2019).

\bibitem{Gray}
R. Gray, Quantization noise in single-loop Sigma-Delta modulation with sinusoidal inputs,
IEEE Transactions on Communications 37, no. 9, 956 -- 968 (1989).

\bibitem{G}
C.S. G\"unt\"urk,
One-bit Sigma-Delta quantization with exponential accuracy,
Communications on Pure and Applied Mathematics, Vol. LVI, 1608--1630 (2003).

\bibitem{G-jams} C.S. G\"unt\"urk, Approximating a bandlimited function using very coarsely quantized data: improved error estimates in sigma-delta modulation. 
Journal of the American Mathematical Society {17}, 229--242 (2004).

\bibitem{GLPSY} C.S. G\"unt\"urk, M. Lammers, A.M. Powell, R. Saab and \"O. Y{\i}lmaz, Sobolev duals for random frames and $\Sigma \Delta$ quantization of compressed sensing measurements.  Foundations of Computational Mathematics {13}, 1--36 (2013)


\bibitem{GT} C.S. G\"unt\"urk and N.T. Thao, Ergodic dynamics in sigma-delta quantization: tiling invariant sets and spectral analysis of error. 
Advances in Applied Mathematics {34},  523--560 (2005).

\bibitem{IYM}
H. Inose, Y. Yasuda and J. Murakami,
A telemetering system by code modulation - $\Delta\Sigma$ modulation,
IRE Transactions on Space Electronics and Telemetry, Volume: SET-8, Issue 3, 204--209 (1962).

\bibitem{JJAP}
J.~Jiang and A.M. Powell, 
Sigma-Delta quantization for fusion frames and distributed sensor networks, in:  ``Frames and other bases in abstract functions spaces," 101--124, Birkh{\"a}user, 2017.

\bibitem{K} F. Krahmer, An improved family of exponentially accurate sigma-delta quantization schemes. Proceedings SPIE 6701 (2007).

\bibitem{KSW} F. Krahmer, R. Saab, R. Ward, Root-exponential accuracy for coarse quantization of finite frame expansions. IEEE Transactions on Information Theory 58, no. 2, 1069--1079 (2012).

\bibitem{KSY} F. Krahmer, R. Saab, \"O. Y{\i}lmaz, Sigma-Delta quantization of sub-Gaussian frame expansions and its application to compressed sensing.
Information and Inference 3, no. 1, 40--58 (2014).

\bibitem{KW} F. Krahmer, R. Ward, Lower bounds for the error decay incurred by coarse quantization schemes.  Applied and Computational Harmonic Analysis {31}, 131--138 (2012).


\bibitem{LD05} G. Luckjiff and I. Dobson,  Hexagonal $\Sigma\Delta$ modulators in power electronics,  IEEE Transactions on Power Electronics {20}, 1075--1083 (2005).


\bibitem{NST} S. R. Norsworthy, R. Schreier and G. C. Temes, Delta-Sigma Data Converters, Theory, Design, and Simulation, IEEE Press, 1997.

\bibitem{SWY}  R. Saab, R. Wang and \"O. Y{\i}lmaz,
Quantization of compressive samples with stable and robust recovery. Applied and Computational Harmonic Analysis {44}, no. 1, 123--143 (2018).

\bibitem{Y}  \"O. Y{\i}lmaz, Stability analysis for several second-order sigma-delta methods of coarse quantization of bandlimited functions. 
Constructive Approximation {18},  599--623 (2002).

\bibitem{Y-gabor} \"O. Y{\i}lmaz, Coarse quantization of highly redundant time-frequency representations of square-integrable functions,
Applied and Computational Harmonic Analysis, 14, no. 2, 107--132 (2003).



\end{thebibliography}

\end{document}